\documentclass[12pt]{article}

\usepackage{amsmath,amssymb,amsthm}
\usepackage{geometry}
\usepackage{graphicx}
\usepackage{epstopdf}
\usepackage{booktabs}
\usepackage{algorithm}
\usepackage{algpseudocode}
\usepackage[dvipsnames]{xcolor}
\usepackage{hyperref}
\usepackage[numbers,sort&compress]{natbib}

\hypersetup{colorlinks=true, linkcolor=blue!70!black,
            urlcolor=blue!70!black, citecolor=blue!70!black}

\theoremstyle{plain}
\newtheorem{theorem}{Theorem}[section]
\newtheorem{lemma}[theorem]{Lemma}

\theoremstyle{definition}
\newtheorem{definition}[theorem]{Definition}

\theoremstyle{remark}

\title{Apple Peel Unfolding of Archimedean and Catalan Solids}

\author{
  T.~Yoshino\thanks{Corresponding author.
    Email: \texttt{tyoshino@toyo.jp}}\\
  \small Department of Mechanical Engineering, Toyo University,\\
  \small 2100 Kujirai, Kawagoe, 350-8585, Japan
  \and
  S.~Chaidee\\
  \small Department of Mathematics, Faculty of Science,
         Chiang Mai University,\\
  \small 239 Huay Kaew Road, Muang District,
         Chiang Mai, 50200, Thailand\\
  \small Advanced Research Center for Computational Simulation,
         Chiang Mai University,\\
  \small 239 Huay Kaew Road, Muang District,
         Chiang Mai, 50200, Thailand
}

\date{2026/04/16}

\begin{document}
\maketitle

\noindent\textbf{MSC 2020:} 52B10, 52B05, 68U05\\
\noindent\textbf{Keywords:} polyhedron net, apple peeling, Archimedean solids,
Catalan solids, unfolding, greedy algorithm

\bigskip

\begin{abstract}
We consider a new treatment for making polyhedron nets referred to as
``apple peel unfolding'': drawing the nets as if we were peeling off apple
skins.
We define apple peel unfolding strictly and implement a program that
derives the sequential selection of the polyhedral faces for a target
polyhedron in accordance with the definition.
Consequently, the program determines whether the polyhedron is peelable
(can be peeled completely).
We classify Archimedean solids and their duals (Catalan solids) as
perfect (always peelable), possible (peelable for restricted cases),
or impossible.
The results show that three Archimedean and six Catalan solids are
perfect, and three Archimedean and three Catalan ones are possible.
\end{abstract}

\tableofcontents

\section{Introduction}\label{introduction}

Peeling (or unwrapping) an object, such as a fruit or package, is a
quite common procedure in our daily lives.
Such peeling leads to plane figures in most cases.
Unfolding polyhedral surfaces is a well-known subject as a method to
obtain polyhedron nets, and has been widely studied from a mathematical
point of view, as compiled in~\cite{Uehara2020}.
It seems reasonable to consider the peeling of an apple or orange as
similar to the unfolding problem of a three-dimensional object.
That is, given the axis of an object, we peel off a surface in such a
way that the surface is flattened or almost flattened to a plane.

Mathematical consideration of the peeling of an object has been described
in various studies.
\cite{Bartholdi2012} considered peeling off the unit sphere and analyzed
the corresponding spiral curve.
In the case of polyhedra, a simplified structure of surfaces,
one consideration similar to peeling apples is zipper unfolding---slicing
a polyhedron along a single cut path like a zipper in a piece of
clothing---proposed by \cite{Demaine2010}, in which a polyhedron net is
acquired based on a Hamiltonian path of the polyhedron.
Spiral unfolding~\cite{ORourke2015} expands the surfaces of polyhedra by
cutting their faces spirally.

Apple peeling, the unfolding of a polyhedron such that the polyhedron net
is spiral, is an interesting unfolding method of polyhedra.
Apple peeling of Platonic solids has already been examined and that
consideration was extended to the higher-dimensional polytopes
\cite{Kaino2019}.
This procedure can be regarded as sequentially and spirally unfolding the
outermost layer of facets with respect to an axis.
Furthermore, Kaino demonstrated two results for the dodecahedron, as
discussed later.
However, there has been no systematic treatment of such unfolding, and a
rigorous definition of apple peeling has yet to be clarified.

One application of the polyhedron peeling problem is the study of
fullerenes, including the naming of the polyhedrons representing the
molecular structure.
In particular, the studies of Manolopoulos~\cite{Manolopoulos1991} focused
on the peeling of fullerenes, considering them as cubic polyhedra, together
with examples of polyhedra that cannot be peeled~\cite{Manolopoulos1993}.
These topics were later investigated and the results significantly updated
in the study of \cite{WSA2018} to clarify a general algorithm for finding
a face-spiral of a polyhedron.

Our objectives in the present research are (a) to define the apple peeling
of a polyhedron strictly and obtain results for Archimedean and Catalan
solids and classify those results, and (b) to draw polyhedron nets and
planar graphs representing the peelings.
The former objective is necessary to extend the topic to arbitrary convex
polyhedra and to higher-dimensional polytopes (e.g., the six regular
polytopes in four-dimensional space), although we focus on only
three-dimensional polyhedra in the present study.
It should be noted that apple peel unfolding is related to graph theory
because peeling polyhedra is equivalent to finding Hamiltonian paths of
the dual polyhedra.
However, solutions to these problems cannot be obtained from their planar
graphs directly due to the non-existence of their axis.

The remainder of this article is organized as follows.
The definition and the algorithm of apple peel unfolding are described in
Section~\ref{sec:methods}.
The results of numerical simulations of the unfoldings of Archimedean and
Catalan solids are explained in Section~\ref{sec:results}.
Finally, we discuss the results in Section~\ref{sec:discussions}.

\section{Methods}\label{sec:methods}

\subsection{Preliminaries}

For a considered polyhedron $\mathcal{P}$, we denote the sets of faces,
vertices, and edges as $F=\{f_1, \ldots, f_n\}$,
$V=\{v_1, \ldots, v_m\}$, and $E=\{e_1, \ldots, e_l\}$, respectively.
Note that $n$ denotes the number of faces, and $f_i$ consists of $k_i$
integers if the $i$-th polygon is a $k_i$-gon; $m$ and $l$ denote the
numbers of vertices and edges of the polyhedron, respectively.

It is well-known and fruitful to consider a polyhedron as a graph by
associating the set of vertices, edges, and faces of a polyhedron
$\mathcal{P}$ with the graph $G=(V, E)$.
The degree of a vertex $v$ of $V$ is the number of edges adjacent to $v$.
By Steinitz's theorem, a graph $G$ is a graph of a polyhedron if and only
if $G$ is a simple planar graph that is 3-connected.
The graph $G$ is called a polyhedral graph or skeleton.
Based on the graph information, the Hamiltonian path is a path consisting
of a sequence of vertices in which each vertex is visited exactly once.

A net of a polyhedron $\mathcal{P}$ is a planar figure derived by cutting
the polyhedral surface.
It is called an e-net if the cut is through the edges of a polyhedron.
To avoid confusion, we use the term ``net'' to mean an e-net in the
present article.
Figure~\ref{Fig:Preliminaries} shows examples of two polyhedra,
a truncated icosahedron (Fig.~\ref{Fig:Preliminaries}a) and a pentakis
dodecahedron (Fig.~\ref{Fig:Preliminaries}d), with associated polyhedron
nets (Figs.~\ref{Fig:Preliminaries}b and~\ref{Fig:Preliminaries}e) and
polyhedral graphs (Figs.~\ref{Fig:Preliminaries}c
and~\ref{Fig:Preliminaries}f).

For a given polyhedron $\mathcal{P}$, the dual polyhedron $\mathcal{P}^*$
is the polyhedron for which each face of $\mathcal{P}$ is represented by a
vertex of $\mathcal{P}^*$, and each pair of adjacent faces of $\mathcal{P}$
represents an edge connecting vertices of $\mathcal{P}^*$.
Given any polyhedron $\mathcal{P}$, its dual can be immediately constructed.
The concept of duality is important for two reasons: one is that the
polyhedra we focus on have the duality relation; the other is that the
apple-peel unfolding of a given solid is equivalent to the search for the
Hamiltonian path of its dual solid with some restrictions.

To specify the solid we are interested in, a polyhedron is said to be an
Archimedean solid, or a semi-regular polyhedron, if its faces are regular
polygons and the number of faces around each vertex is the same.
The dual Archimedean solids are called Catalan solids.
We represent the individual Archimedean and Catalan solids by using curly
brackets and square brackets, respectively, as follows:
$\{n_1, n_2, \ldots, n_q\}$ and $[n_1, n_2, \ldots, n_q]$ refer to
Archimedean and Catalan solids, respectively.
The former representation indicates each vertex of the polyhedron as being
the joint of an $n_1$-gon, $n_2$-gon, $\ldots$, and $n_q$-gon and the
latter indicates that each face of the polyhedron has vertices of degrees
of $n_1$, $n_2$, $\ldots$, and $n_q$.
Therefore, two polyhedra with a dual relation are represented by the same
number combination with different brackets.

\begin{figure}[ht]
\centering
\resizebox*{14cm}{!}{\includegraphics{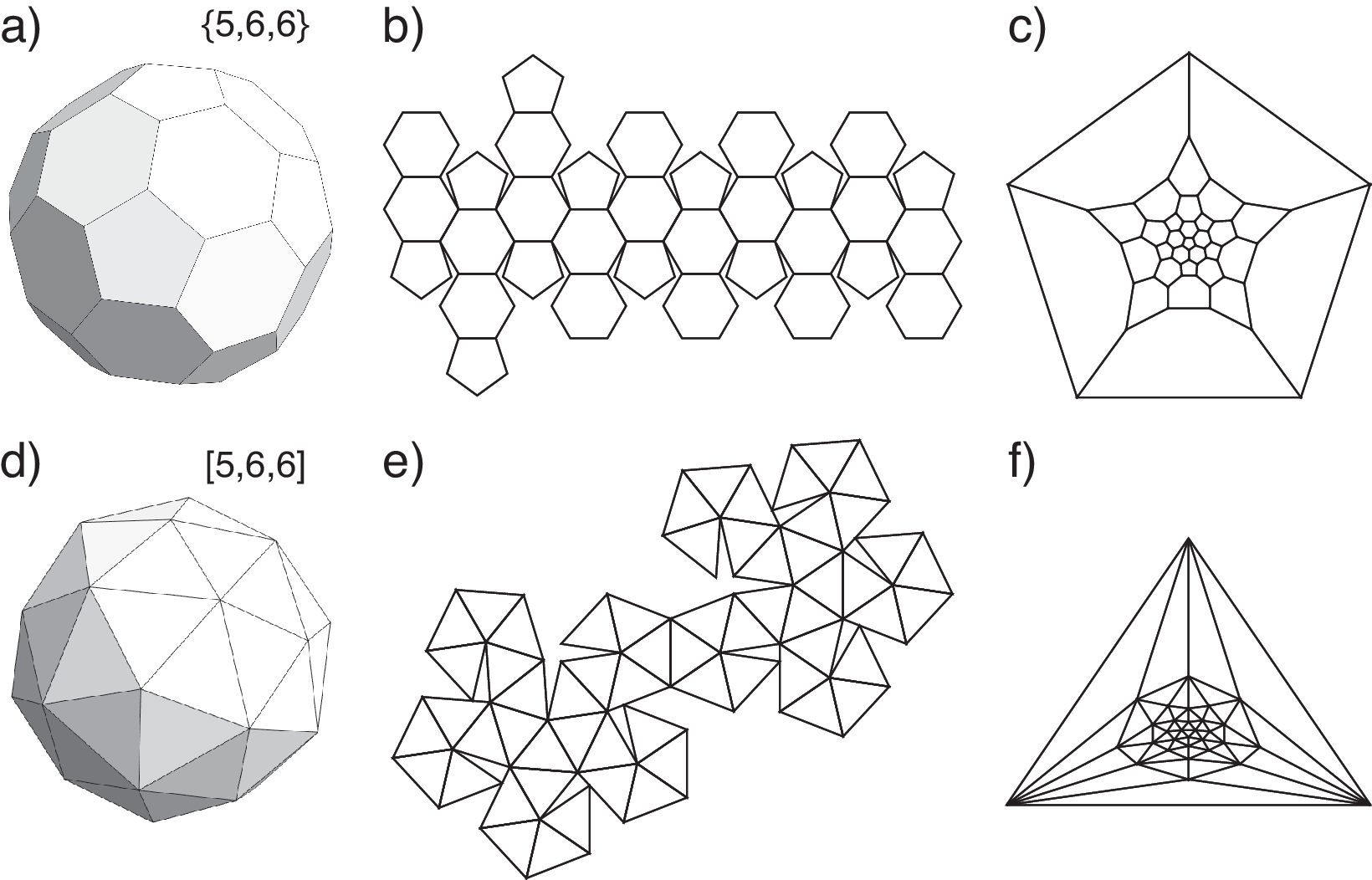}}
\caption{(a) Truncated icosahedron $\{5,6,6\}$,
  (b) net of truncated icosahedron,
  (c) polyhedral graph of truncated icosahedron,
  (d) pentakis dodecahedron $[5,6,6]$ (dual of truncated icosahedron),
  (e) net of pentakis dodecahedron,
  (f) polyhedral graph of pentakis dodecahedron.}
\label{Fig:Preliminaries}
\end{figure}

Polytope unfolding appears widely in discrete and computational geometry
problems.
In the present study, we are interested in the unfolding of a polytope
following the apple peeling procedure presented in~\cite{Bartholdi2012}
for the case of a sphere, and following~\cite{Kaino2019} for the faces of
the Platonic solids and the regular polytopes in $\mathbb{R}^4$.

\subsection{Apple Peeling Procedure}

To follow the apple peeling procedure in the real world, we assume that an
apple whose skin is peeled off by a knife is rotated along a fixed direction
defined by an axis of peeling recognized as the apple stalk.
Mathematically, the axis of polyhedron peeling is determined by a fixed ray
emanating from the center of the polyhedron while the polyhedron is rotated
along a fixed axis direction.
Hence, it is necessary to define two other orthogonal rays, which are also
orthogonal to the given axis.
Remark that the peeling in this context satisfies the unfolding of a
polyhedron to a net so that the edges of the polyhedron connect the faces
of the polyhedron.

Therefore, we unfold polyhedral faces according to a fixed axis containing
the polyhedron's centroid.
The selection of all faces is sequentially chosen clockwise (or
anti-clockwise) around the axis from top to bottom.
The selection's direction corresponds to the handler's dominant hand
hereafter.
We assume that the handler is right-handed.

Assume that a three-dimensional polyhedron $\mathcal{P}$ is located in the
three-dimensional space $\mathbb{R}^3$.
A pair of neighboring faces $F_1$ and $F_2$ is chosen arbitrarily, and the
apple peeling of the facets starts from $F_1$, and then $F_2$ is selected
as the next face to be peeled off.

For convenience, we set the polyhedron $\mathcal{P}$ so that the origin
$\mathrm{O}$ is located at the centroid of $\mathcal{P}$.
The Cartesian coordinate O-$x_1x_2x_3$ is set so that the $x_3$-axis
contains the centroid of $F_1$.
Then, the axis of peeling of the polyhedron is defined as the $x_3$-axis,
which contains the centroids of both $\mathcal{P}$ and $F_1$.
Specifically, the coordinates of the centroid of $F_1$ are defined as
$(0, 0, x_{3,1})$ and we set the direction of the $x_3$-axis according to
$x_{3,1}>0$.
Therefore, we regard that $F_1$ is set on the top.
As an example of the above definition, we consider the case of the
three-dimensional space shown in the schematic illustration of a truncated
icosahedron in Fig.~\ref{Fig:definition}.

After the faces $F_1$ and $F_2$ are selected, the next candidates, the
neighboring faces of the previous face in the sequence, are chosen
according to the following priority.
If the number of candidates is only one, we use that candidate.
If there are multiple candidates, then we select the face whose centroid
is at the highest position among the faces on \textit{the left side}.
If there are no candidates on the left side, we select the polygon whose
centroid is located the lowest among the polygons on the right side.
Finally, we continue the selection process until all the faces have been
selected or there are no unselected neighbors.

\begin{figure}[ht]
\centering
\resizebox*{6cm}{!}{\includegraphics{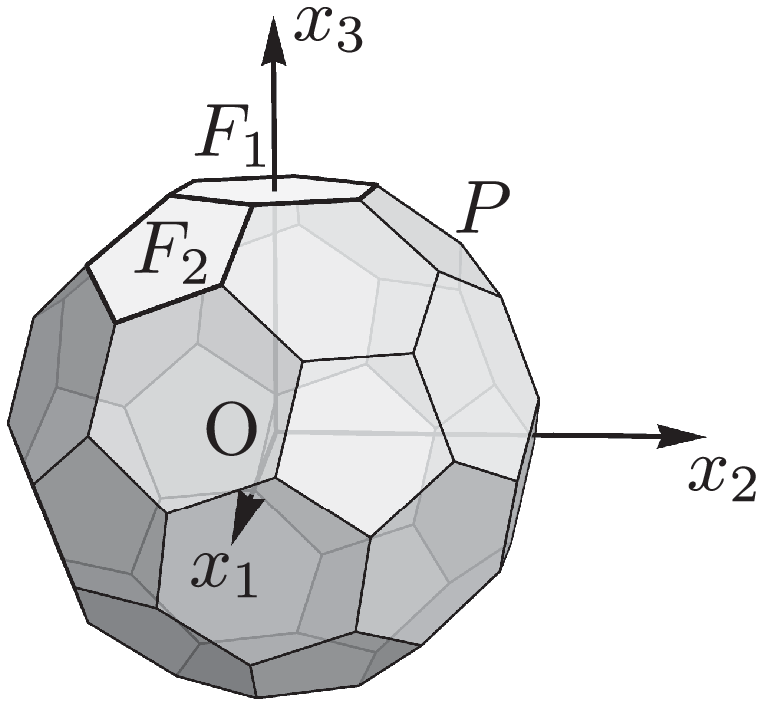}}
\caption{Definition of coordinates in three-dimensional space.
  The $x_3$-axis is the axis of peeling.}
\label{Fig:definition}
\end{figure}

Figure~\ref{Fig:Example3D} shows an example of the intermediate state of
the unfolding procedure of a truncated icosahedron.
The arrow shows the $x_3$-axis of peeling.
The planar connection of gray polygons indicates the intermediate state of
its polyhedron net.
Yellow and green polygons on the polyhedron are the unfolded and not
unfolded faces, respectively.
We represent the last selected face with red.
The image illustrates that the peeling was started from a hexagon, the
handler chose a pentagon as the second polygon, and the peeling continued
until the selection of the eighteenth face.

\begin{figure}[ht]
\centering
\resizebox*{8cm}{!}{\includegraphics{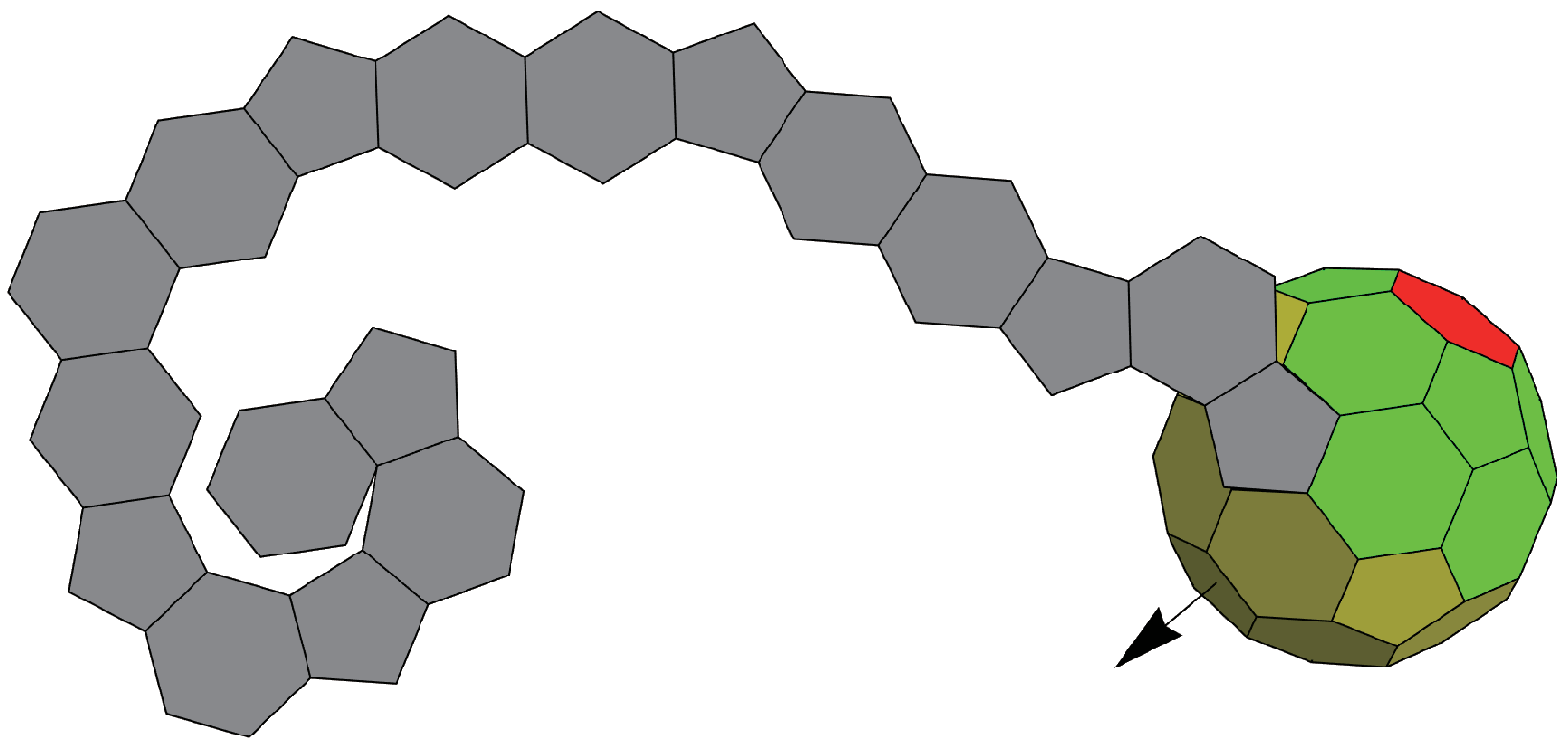}}
\caption{Example of intermediate state of apple peel unfolding of a
  truncated icosahedron. The arrow represents the axis of peeling.}
\label{Fig:Example3D}
\end{figure}

\begin{figure}[ht]
\centering
\resizebox*{8cm}{!}{\includegraphics{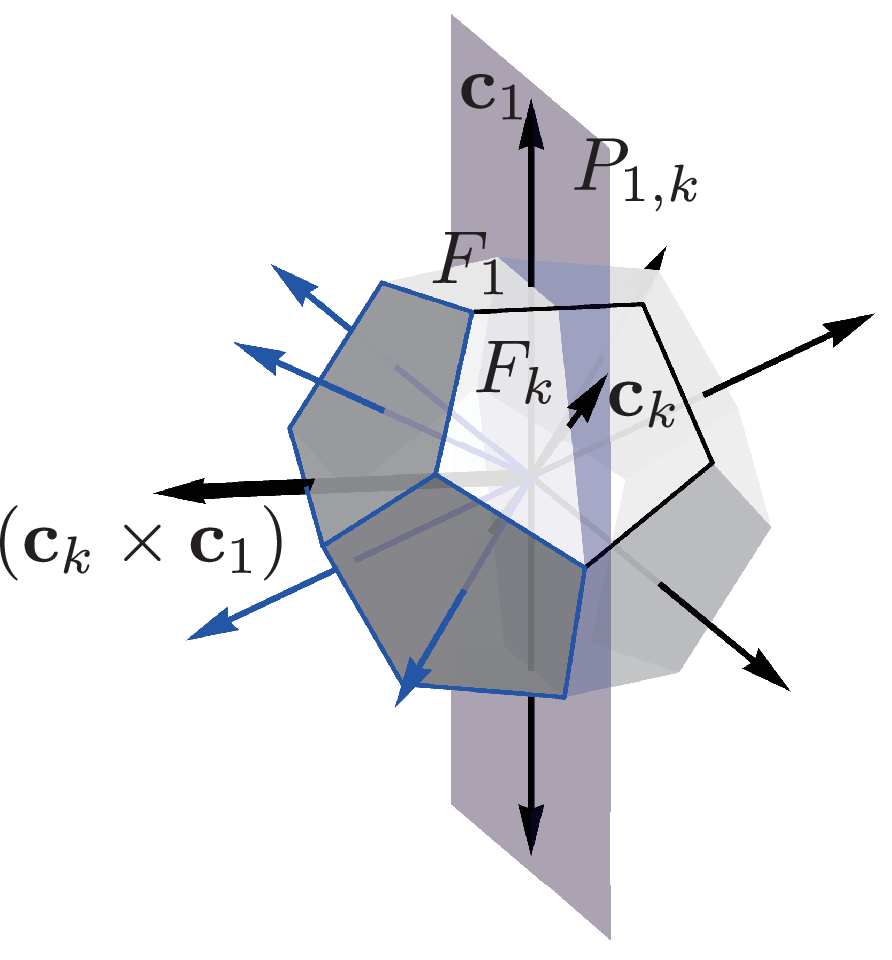}}
\caption{Example of consideration of the left side of a point $P_{1,k}$
  with respect to the plane defined by the two faces $F_1$ and $F_k$.}
\label{Fig:leftside}
\end{figure}

During the peeling procedure, the determination of being on the
\textit{`left side'} with respect to the $x_3$-axis and the considered
face $F_k$ needs to be clarified rigorously, as illustrated in
Fig.~\ref{Fig:leftside}.
Suppose that $\mathbf{c}_1$ is a vector proportional to the vector from
the origin to $c_1$, the centroid of face $F_1$, which is the starting
face.
For the $F_k$ selected during the procedure with centroid $c_k$, we
construct a vector $\mathbf{c}_k$ as a vector proportional to the vector
from the origin to $c_k$.
Then the plane $P_{1,k}$ containing the origin, $c_1$, and $c_k$ can be
defined uniquely and the normal vector of $P_{1,k}$ is obtained as
$(\mathbf{c}_k \times \mathbf{c}_1)$, as shown in the figure.
To consider whether a point $P(x, y, z)$ is on the left-hand side with
respect to the choice of $F_k$, we verify whether
$(\mathbf{c}_k \times \mathbf{c}_1) \cdot \overrightarrow{OP} > 0$,
i.e., the angle between the vector $(\mathbf{c}_k \times \mathbf{c}_1)$
of plane $P_{i,k}$ and vector $\overrightarrow{OP}$ is acute, as shown
by the blue arrows in Fig.~\ref{Fig:leftside}.

\subsection{Algorithm for Apple Peeling}

The algorithm for numerical simulation is presented as
Algorithm~\ref{peeling}.
Input data consist of three sets:
(1) vertex coordinates $V=\{v_1, v_2, \ldots, v_m\}$;
(2) vertex indices of all faces $F=\{f_1, f_2, \ldots, f_n\}$,
where $n$ denotes the number of faces and $f_i$ consists of $k$ integers
if the $i$-th polygon is a $k$-gon;
(3) two neighboring face indices $\langle F_1, F_2\rangle$ to start with.
The face $F_2$ must be a neighbor of the first face $F_1$.
The algorithm output is an order of face indices
$\langle F_1, F_2, \ldots, F_j\rangle$, and the unfolding is regarded as
having finished successfully if $j=n$.
We refer to the result in the following as the selection order.

\begin{enumerate}
\item Allocate a target polyhedron in the 3D Cartesian coordinate space
  such that the centroid of the polyhedron is at the origin of the
  coordinates.
\item Set the counting variable $k=2$, and rotate the polyhedron to set
  the centroid of the first polygon $F_1$ on the top; this means the
  rotation of the polyhedron sets the position vector of the centroid of
  $F_1$ on the $z$-axis.
\item Select a face $F_{k+1}$ among the adjacent faces of $F_k$ based on
  the following priority:
  \begin{enumerate}
  \item Select the remaining face automatically as $F_{k+1}$ if it is
    the only remaining neighbor.
  \item First, divide the Cartesian space into two parts by the plane
    composed of the $z$-axis and the centroid of $F_k$.
    Second, list the adjacent polygons whose centroids belong to the left
    half-space.
    Finally, select the face which has the largest $z$-coordinate value
    as $F_{k+1}$.
  \item If there are no neighboring faces in the left half-space,
    select the neighbor with the lowest $z$-coordinate.
  \end{enumerate}
\item Go back to step~3 unless all the faces have been selected or if the
  next face has no neighboring faces.
\end{enumerate}

\begin{algorithm}[tb]
\caption{Algorithm of apple peeling a polyhedron}
\label{peeling}
\begin{algorithmic}[1]
\Require $F_2$ must be one of the neighbors of $F_1$.
\Function{apple\_peeling}{$V=\{v_1, v_2, \ldots, v_m\}$,
  $F=\{f_1, f_2, \ldots, f_n\}$, $I=\langle F_1, F_2\rangle$}
  \State $k \gets 2$
  \State $l \gets$ a list of unselected neighbors of $F_{k}$
  \State $i \gets$ number of elements of $l$
  \State $c \gets$ list of centroids calculated with $V$ and $F$
  \Repeat
    \If{$i = 1$}
      \State $F_{k+1} \gets$ the face number listed in $l$
    \ElsIf{$i > 1$}
      \State $q \gets$ number of centroids in the left half-space
      \State $s \gets$ list of centroids in the left half-space
      \If{$q > 0$}
        \State $F_{k+1} \gets$ the face whose centroid $z$-coordinate
               is highest among $s$
      \Else
        \State $F_{k+1} \gets$ the face whose centroid $z$-coordinate
               is lowest among all remaining neighbors
      \EndIf
    \EndIf
    \State $k \gets k + 1$
    \State $l \gets$ a list of unselected neighbors of $F_{k}$
    \State $i \gets$ number of elements of $l$
  \Until{$i = 0$}
  \State \Return $\langle F_1, F_2, \ldots, F_{k}\rangle$
\EndFunction
\end{algorithmic}
\end{algorithm}

For the apple peeling algorithm, we guarantee that the procedure of the
algorithm yields a unique sequence by the following lemma.

\begin{lemma}
Let $\mathcal{P}$ be a polyhedron.
The sequence of faces $\langle f_1, f_2, \ldots, f_n\rangle$ obtained by
the apple peeling algorithm is unique up to the choice of the initial two
faces.
\end{lemma}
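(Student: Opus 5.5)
The plan is to argue by induction on the step counter $k$. The claim is that once the polyhedron $\mathcal{P}$, its placement with centroid at the origin, and the ordered pair $\langle F_1,F_2\rangle$ are fixed, every later face $F_{k+1}$ is a deterministic function of the data already produced. Concretely, $F_{k+1}$ depends only on the centroids $c_1,\dots,c_n$, the adjacency relation, and the set $\{F_1,\dots,F_k\}$ of faces already selected.

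First I will fix the frame. The centroids $c_i$ are computed from $V$ and $F$ alone. The rotation in step~2 sends $c_1$ to the positive $x_3$-axis. That rotation is determined only up to a rotation about the $x_3$-axis itself, but such a rotation changes neither any $x_3$-coordinate nor the sign of $(\mathbf{c}_k\times\mathbf{c}_1)\cdot\overrightarrow{OP}$. Both the ``left side'' test and the ``highest/lowest'' comparisons are therefore invariant, and the normalization introduces no freedom. The base case is $k=2$, where $\langle F_1,F_2\rangle$ is given.

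For the inductive step, assume $F_1,\dots,F_k$ are uniquely determined. The candidate list $l$ of unselected neighbours of $F_k$ is then determined. I split into the three branches of step~3. If $|l|=0$ the algorithm stops. If $|l|=1$ the choice is forced. If $|l|>1$, I partition $l$ by the sign of $(\mathbf{c}_k\times\mathbf{c}_1)\cdot c_j$. If the left set is nonempty, the rule takes the maximizer of the $x_3$-coordinate over it; otherwise it takes the minimizer over all of $l$. In each case the rule is a well-defined function of determined data, provided the extremum is attained at a single face. Induction then gives uniqueness of the whole sequence, which ends at the same length $j$ for any execution.

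The main obstacle is exactly that proviso: ties. Two candidates could have equal $x_3$-coordinates. A candidate could also lie on the plane $P_{1,k}$, where the dot product is $0$; by the definition's strict inequality such a candidate is assigned to the right side. A further degenerate case is $\mathbf{c}_k$ parallel to $\mathbf{c}_1$, as when $F_k$ is the face antipodal to $F_1$, so that $\mathbf{c}_k\times\mathbf{c}_1=0$. In that case the left set is simply empty and the rule falls back to the lowest face, which is still well defined. I would handle genuine ties in $x_3$ by adjoining a fixed secondary tie-break, for instance the lowest face index, and note that this does not depend on any choice made during the run. Alternatively, and closer to the lemma's intent, I would observe that two neighbours of $F_k$ with equal height and both on the left would have to be symmetric under a symmetry of $\mathcal{P}$ fixing the axis. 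Choosing either one then yields congruent peelings, so the sequence is unique up to symmetry, hence up to the choice of the initial two faces. I would state explicitly which reading is adopted and prove the lemma under it.
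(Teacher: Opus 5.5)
Your induction on the step counter is the same skeleton as the paper's proof, which also argues that once $F_1,\dots,F_k$ are fixed, step~3 forces $F_{k+1}$. The only substantive issue is ties. The paper just asserts, without a real argument, that the left-side neighbours of $F_k$ have pairwise distinct $z$-coordinates, and it says nothing about ties in the ``lowest'' fallback. You are right not to accept this. For an arbitrary polyhedron, equality of two candidate heights is one real equation in the vertex coordinates, and nothing in the hypotheses rules it out. The paper's own report that rounding errors produced extra patterns for $\{3,10,10\}$ also suggests that exactly degenerate comparisons really occur. So the provable statement is determinism of the algorithm completed by a fixed tie-break, or determinism under a genericity hypothesis excluding equal candidate heights. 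Your first option proves exactly that; state it that way rather than as a property of the algorithm as written.

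Drop your second option, because it cannot yield the lemma.
\begin{itemize}
\item For the runs after choosing $A$ or $B$ to correspond, you need an isometry $g$ of $\mathcal{P}$ with $g(c_1)=c_1$, $g(A)=B$ and $g(\{F_1,\dots,F_k\})=\{F_1,\dots,F_k\}$. This is because the rest of the run depends on the axis, the current face and the selected set.
\item Moreover $g$ must be a rotation about the axis. A reflection satisfies $(g\mathbf{a}\times g\mathbf{b})\cdot g\mathbf{p}=-(\mathbf{a}\times\mathbf{b})\cdot\mathbf{p}$, so it exchanges left and right and turns the right-handed rule into the left-handed one. The paper itself observes that handedness changes outcomes.
\item Equal heights do not by themselves produce such a $g$.
\item Even when such a $g$ exists, you get two different face sequences with the same $\langle F_1,F_2\rangle$ that are merely congruent. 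That refutes uniqueness rather than proving it: ``unique up to symmetry'' does not imply ``determined by the first two faces''.
\end{itemize}

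Finally, your antipodal case is not automatically safe. If $\mathbf{c}_k\times\mathbf{c}_1=0$ and $\mathcal{P}$ has rotational symmetry about the axis, the neighbours of $F_k$ come in orbits of equal height, so the ``lowest'' rule can tie there too. Your global tie-break covers this, but the fallback alone is not single-valued.
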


\begin{proof}
We use mathematical induction to prove the argument by starting with $n=3$.
For a given polyhedron $\mathcal{P}$ with $n$ faces; we assume that faces
$f_1, \ldots, f_n$ have centroids $c_1, \ldots, c_n$, respectively.
Let $f_{\alpha}, f_{\beta}$ be the first two faces from the set $F$,
i.e., $f_{\alpha}=F_1$ and $f_{\beta}=F_2$, in the sequence generated by
Algorithm~\ref{peeling}.

We remark that for a given polyhedron $\mathcal{P}$, the centroids of
neighboring faces of the face $f_k$ on the left-hand side have different
$z$-coordinates.
This is because the different orientations of faces with respect to the
edges of the face $f_k$ affect the positions of the faces' centroids.

For the basis step, we choose the third face $F_3$ of the sequence
satisfying step~3.
Let $f_{\ell,1}, \ldots, f_{\ell,j}$ be the neighboring faces of face
$F_2$.
Then, there exists a plane passing through $\hat{x}_3$ and $\hat{F}_2$,
which contains the unique face whose centroid is on the left-hand side of
the plane $P_{1,2}$ and is the highest among the neighboring faces.
Then, the third face can be chosen uniquely.

To prove the inductive step, we assume that the sequence of faces is
chosen until the $k$-th face, i.e., the sequence
$\langle F_1, \ldots, F_k\rangle$ has been generated.
The next choice $F_{k+1}$ can be chosen to satisfy the priority in
step~3 for all $k+1 \leq n$.
Since the determination of the left-hand side with respect to the
previous face $F_k$ is unique, and the centroids of neighboring faces
of $F_k$ are different, the choice of $F_{k+1}$ is also unique.
This concludes the proof.
\end{proof}

We remark that the peeling procedure in step~3 is specified such that
the $z$-coordinates of the face centroids are taken into consideration
when faces are chosen.
This differs from the study of~\cite{WSA2018}, which mainly focused on
spiral extraction of a polyhedral graph.

\subsection{Peelability}

In the case of peeling an apple or orange in the real world, there exists
a continuous curve on a sphere.
This curve can be spread out to be a spiral curve on a plane, as stated
in the study of~\cite{Bartholdi2012}.
Therefore, a similar concept can be considered in the case of a polyhedron.
For each polyhedron face $F_i$, we choose the centroid $c_i$ of the face
$F_i$ as a representative point.
If two faces $F_j, F_k$ are adjacent, then we assume that there is an
edge connecting the two centroids $c_j$ and $c_k$.
Therefore, we can obtain a polygonal (spiral) chain.
Since this information can be considered in a discrete structure, it is
sufficient to consider it as a sequence of faces from the apple peeling
algorithm.

The term \emph{peelable} is defined as the situation where all faces are
selected when we apply the apple peeling procedure to a polyhedron.
It is determined by the sequential selection of two neighboring faces.
By examining all two sequential selections of the neighboring faces, the
nature of a polyhedron is determined as either peelable or not.
The apple peeling procedure terminates if one of the following conditions
is satisfied: no next candidates exist or all faces are in the sequence
$\mathcal{F}$.

\begin{definition}
Let $\mathcal{P}$ be a polyhedron with a set of $n$ faces
$F=\{f_1, \ldots, f_n\}$.
Suppose that $\mathcal{F}_{\alpha}=\langle F_1, \ldots, F_t\rangle$ is
the sequence of faces generated by the apple peeling algorithm such that
$F_i = f_j$ for some $i=1, \ldots, t$ and $j=1, \ldots, n$, and
$F_1 = f_{\alpha}$ for some $\alpha$.
The polyhedron $\mathcal{P}$ is \emph{peelable} with respect to the
initial face $f_{\alpha}$ if $|\mathcal{F}_{\alpha}|=n$.
Otherwise, it is \emph{not peelable} with respect to $F_1$.
\end{definition}

\begin{definition}[Peelability]
Let $\mathcal{P}$ be a polyhedron and $\mathcal{F}_{\alpha}$ be the
sequence generated by the apple peeling procedure with the initial face
$f_{\alpha}$.
\begin{enumerate}
  \item $\mathcal{P}$ is \emph{perfectly peelable} if
        $\mathcal{F}_{\alpha}$ is a peelable sequence for all
        $\alpha=1, \ldots, n$.
  \item $\mathcal{P}$ is \emph{possibly peelable} if there exists a
        peelable sequence $\mathcal{F}_{\alpha}$, but there is a sequence
        $\mathcal{F}_{\beta}$ which is not peelable.
  \item $\mathcal{P}$ is \emph{non-peelable} if $\mathcal{F}_{\alpha}$
        is not a peelable sequence for all $\alpha=1, \ldots, n$.
\end{enumerate}
\end{definition}

\subsection{Visualization}

We used three types of visualizations in the following:
(a) a three-dimensional representation to display the sequential
selection,
(b) a polyhedron net, and
(c) a planar graph.
We have already shown an example of the three-dimensional representation
in Fig.~\ref{Fig:Example3D}.
In addition, we used red polygons to represent the remaining ones during
the peeling.
This means that the existence of red polygons indicates the failure of
peeling.
Polyhedron nets are used to display the results of the unfolding when it
finishes successfully.
We used a grayscale to represent the sequence; from dark to light gray
corresponds to from a start to a goal of peeling.
A planar graph is used to illustrate the whole trace of the selection.

The three-dimensional representation and the polyhedron net use the same
procedure: a repetition of rotations of polygons.
In the case representing the intermediate state of peeling in
three-dimensional space, we rotate the polygons $F_k$ ($k=1,2,\ldots,i$)
with respect to the edge shared by $F_i$ and $F_{i+1}$ simultaneously
such that the solid angle between the two faces is flat.
We obtain a polyhedron net by repeating rotating from $i=2$ to $i=n-1$
continuously, where $n$ denotes the number of elements of the order.
It should be noted that some disconnections were observed in the case of
some polyhedron nets for a Catalan solid due to numerical errors.
We corrected such disconnections manually as necessary.
The insides of the polyhedra are facing us for all the polyhedron nets
in the following.

The planar graph representation is functional, especially in the case of
uniform polyhedra, because the symmetry properties of the polyhedra are
mostly conserved.
First, we project the vertices from $(0, 0, \hat{x}_3)$,
$\hat{x}_3 > x_{3,1}$, to the $xy$-plane and arrange the vertices based
on the Tutte method~\cite{Tutte}.
Second, we rotate the vertices around the origin so that the projected
centroid of $F_2$ is on the $y$-axis.
Third, we draw the edges.
Finally, we add the numerical results by drawing the lines connecting the
centroids of the faces and coloring the unselected (remaining) faces red.
Consequently, the outermost polygon of the planar graph represents the
exact shape of $F_1$ and the inner structure of the graph represents the
vertex connections.

\section{Results}\label{sec:results}

\subsection{Numerical Simulations}

We carried out numerical simulations using Mathematica\texttrademark.
The primary data of polyhedra, such as the vertex coordinates and vertex
numbers of each face, were obtained from a library of Mathematica
(``PolyhedronData'' function).
Although Mathematica allows us to manipulate exact values
(e.g., $\pi$ and $\sqrt{3}$), we used approximated values to decrease
the execution time because the simulation using exact values took an
impractical amount of time.
This means that the results were numerical rather than exact.
We tried all possible combinations of $\langle F_1, F_2\rangle$ as the
initial values of the simulation and summarized the results comparisons
of their polyhedron nets.

\subsection{Platonic Solids}\label{platonic-solids}

All the Platonic solids were classified as perfect polyhedra because the
Platonic solids are not changed in nature by the selection of $F_1$ and
$F_2$, so the results are the same for all combinations of $F_1$ and
$F_2$.
Figure~\ref{Fig:Platonic} illustrates these results, which have previously
been reported~\cite{Kaino2019}.
We obtained only one type of polyhedron net for dodecahedra, although
Kaino listed two types of polyhedron nets.
This means that only the polyhedron net in Fig.~\ref{Fig:Platonic}
matches our definition of apple peel unfolding.
It is notable that we did not obtain the unfolding of a dodecahedron
labeled by Kaino as the Atake-type unfolding (``Dodecahedron~1'' in
Kaino's notation).
The net consists of ten pentagons connected in a zigzag fashion, and the
remaining two pentagons opposite each other on the top and bottom.

\begin{figure}[ht]
\centering
\resizebox*{8cm}{!}{\includegraphics{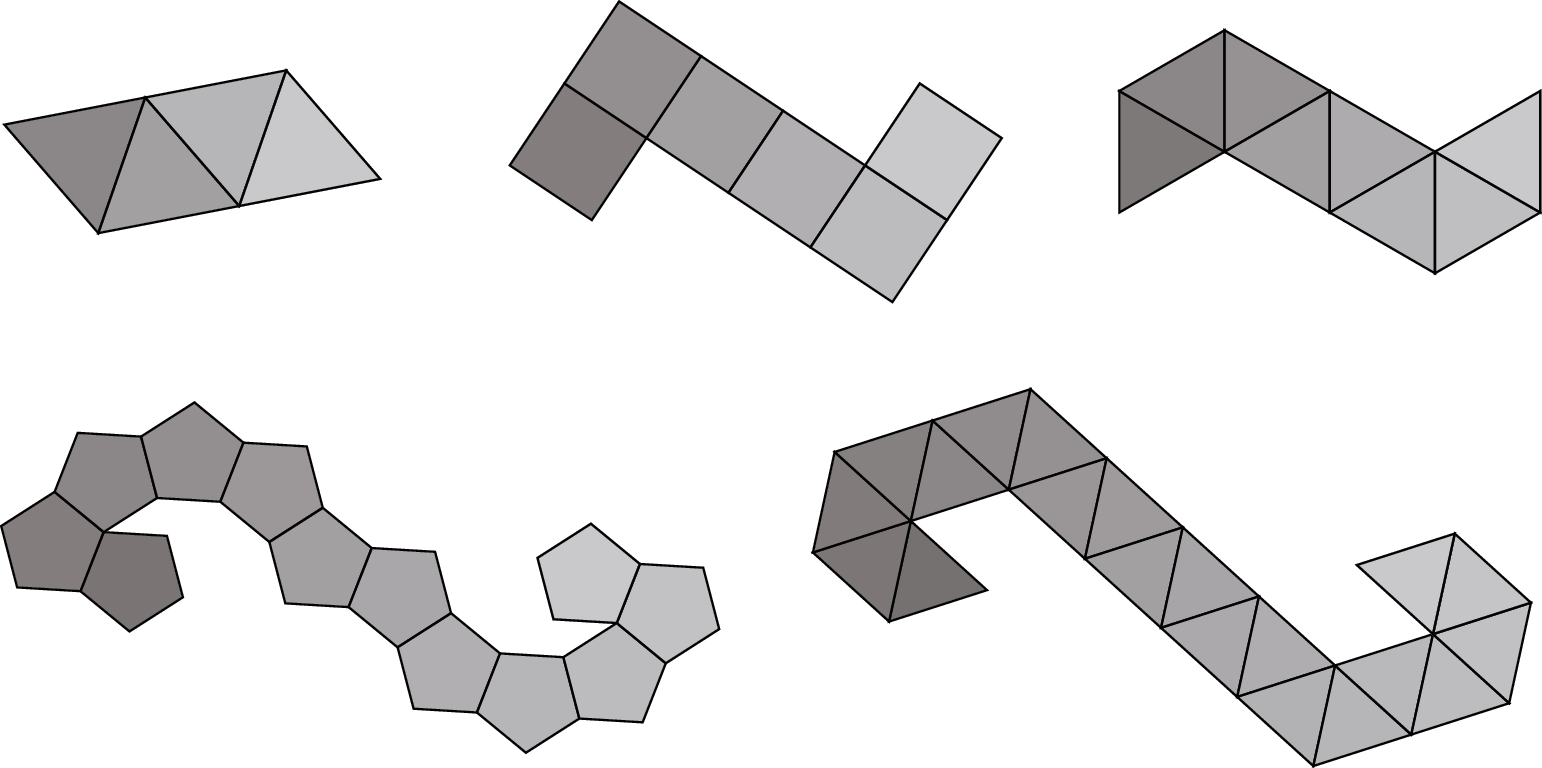}}
\caption{Polyhedron nets of Platonic solids obtained from apple peel
  unfolding.}
\label{Fig:Platonic}
\end{figure}

\subsection{Archimedean Solids}\label{archimedean-solids}

The results for Archimedean and Catalan solids are summarized in
Fig.~\ref{fig:AllNets} and Table~\ref{tab:summary}.
Figure~\ref{fig:AllNets} illustrates the examples of the polyhedron nets
for which the unfolding has finished successfully for all peelable
polyhedra.
For Archimedean solids, the numbers of perfect, possible, and impossible
polyhedra are three, three, and seven, respectively.
On the other hand, those for Catalan solids are six, three, and four,
respectively.
There is no apparent relation between the results of Archimedean solids
and those of their duals.
Therefore, we conclude that the symmetry properties of the faces and
vertices do not play a critical role in peelability.

We included information on the existence of the Hamiltonian paths in
Table~\ref{tab:summary} because apple peeling is the problem of finding
Hamiltonian paths under restricted conditions.
Seven Catalan solids have Hamiltonian paths, as shown in the table.
This suggests that some Archimedean solids intrinsically cannot be
apple-peeled.
We omit the information on the Hamiltonian paths of the Archimedean solid
in the table because all solids have Hamiltonian paths.
From this viewpoint, the ratios of the number of peelable solids to the
number of solids having Hamiltonian paths are 6/7 and 9/13 for
Archimedean and Catalan solids, respectively.

\begin{figure}[ht]
\centering
\resizebox*{14cm}{!}{\includegraphics{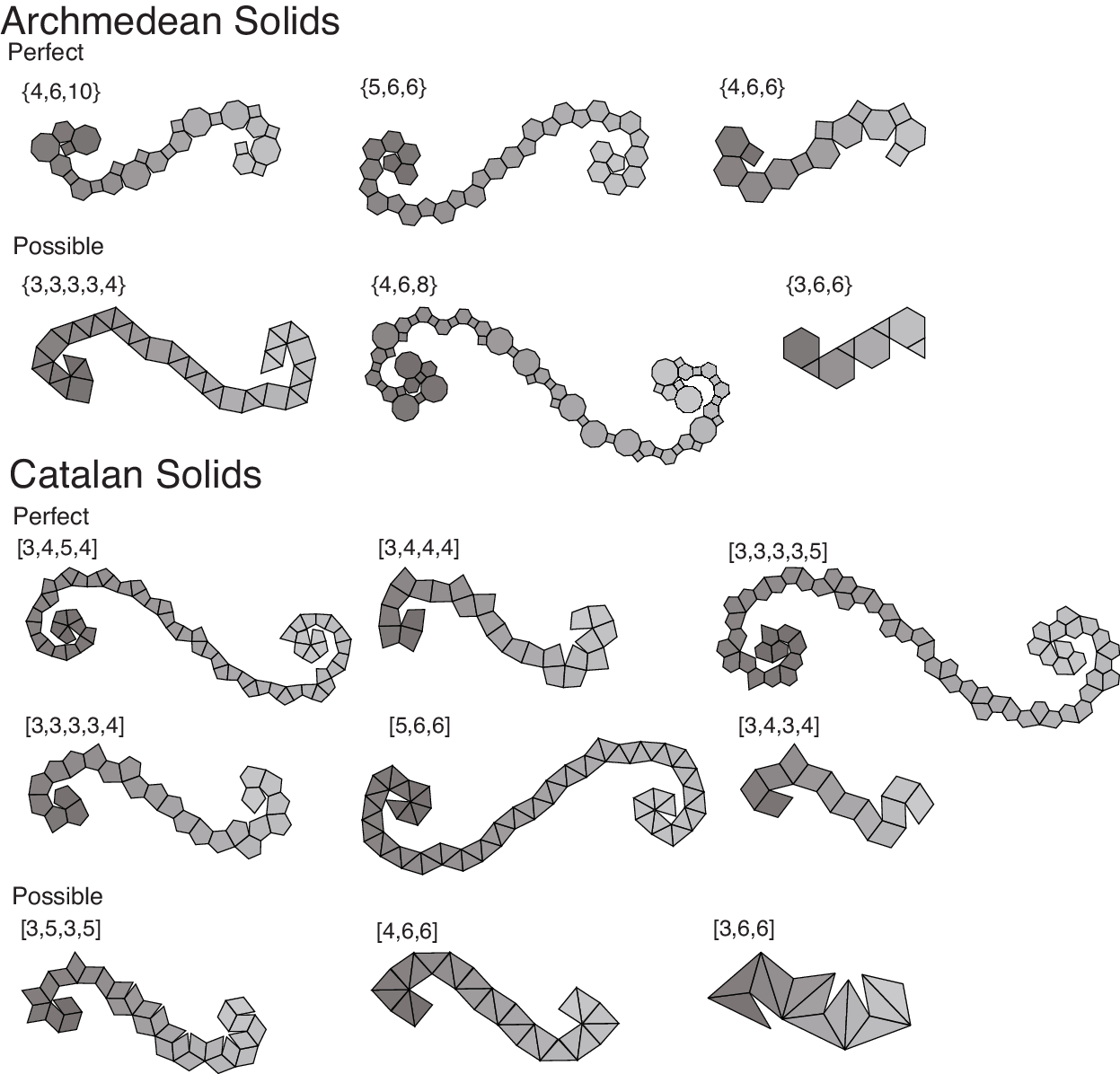}}
\caption{Examples of polyhedron nets of all polyhedra for which they were
  successfully completed.}
\label{fig:AllNets}
\end{figure}

\begin{table}[ht]
\centering
\caption{Summary of numerical simulations.
  Bold text indicates that complete results are presented herein.}
\label{tab:summary}
\begin{tabular}{llccc}
\toprule
Name & Index
  & \multicolumn{2}{c}{Peelability}
  & Hamiltonian path \\
  & & Archimedean & Dual (Catalan) & of Catalan solids \\
\midrule
Cuboctahedron                 & $\{3,4,3,4\}$   & Impossible & Perfect     & n.a. \\
Great Rhombicosidodecahedron  & $\{4,6,10\}$    & Possible   & Impossible  & available \\
Great Rhombicuboctahedron     & $\{4,6,8\}$     & Perfect    & Impossible  & available \\
Icosidodecahedron             & $\{3,5,3,5\}$   & Impossible & Possible    & n.a. \\
Small Rhombicosidodecahedron  & $\{3,4,5,4\}$   & Impossible & Perfect     & n.a. \\
Deltoidal Icositetrahedron    & $\{3,4,4,4\}$   & Impossible & Perfect     & n.a. \\
Snub Cube                     & $\{3,3,3,3,4\}$ & \textbf{Possible} & Perfect & available \\
Snub Dodecahedron             & $\{3,3,3,3,5\}$ & Impossible & Perfect     & available \\
Truncated Cube                & $\{3,8,8\}$     & Impossible & \textbf{Impossible} & n.a. \\
Truncated Dodecahedron        & $\{3,10,10\}$   & \textbf{Impossible} & Impossible & n.a. \\
Truncated Icosahedron         & $\{5,6,6\}$     & \textbf{Perfect} & \textbf{Perfect} & available \\
Truncated Octahedron          & $\{4,6,6\}$     & Perfect    & \textbf{Possible} & available \\
Truncated Tetrahedron         & $\{3,6,6\}$     & Possible   & Possible    & available \\
\bottomrule
\end{tabular}
\end{table}

All perfect Archimedean solids contain hexagons.
The polyhedron nets of a truncated icosahedron $\{5,6,6\}$ are summarized
as the three types shown in Fig.~\ref{fig:11Nets} as examples of perfect
solids.
What differs among the three types is the appearance of the first
pentagon; they are $F_1=5$, $F_2=5$, and $F_3=5$ from left to right.
The results of the polyhedron were the same as the spiral naming of
$\mathrm{C}_{60}$ fullerene~\cite{Manolopoulos1991}.

\begin{figure}[ht]
\centering
\resizebox*{12cm}{!}{\includegraphics{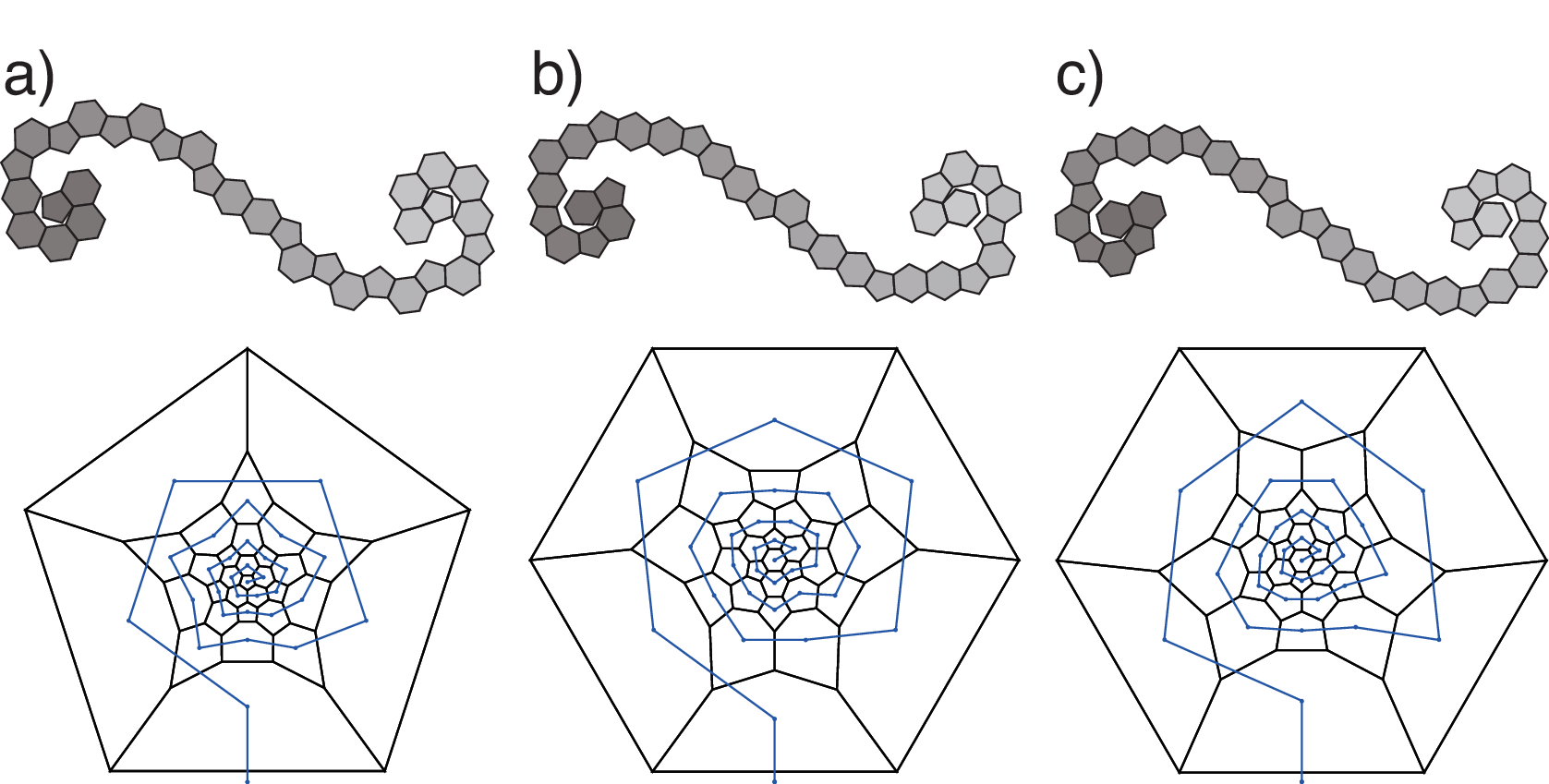}}
\caption{Results of apple peel unfolding of truncated icosahedron
  $\{5,6,6\}$: (a) $F_1=5$, (b) $F_2=5$, and (c) $F_3=5$.}
\label{fig:11Nets}
\end{figure}

The results for a snub cube are summarized in Fig.~\ref{fig:07NetsAndGraphs}
with polyhedron nets and planar graphs.
The three patterns in the top row
(Figs.~\ref{fig:07NetsAndGraphs}a,~b, and~c) represent the successful
cases, and the rest are the failed cases.
The peelings from a square have two patterns
(Figs.~\ref{fig:07NetsAndGraphs}a and~d), and those from a triangle have
four.
Chirality affects the results in this case.
Comparison of Figs.~\ref{fig:07NetsAndGraphs}a and~d shows the influence
in the cases starting from a square.
The appearance of the second square is one of the differences between the
two patterns.
The peeling succeeded if the second square appeared as the fourteenth
polygon and failed if the square appeared as the sixteenth.
In the case of a left-handed person, unfolding succeeded in
Fig.~\ref{fig:07NetsAndGraphs}d and failed in
Fig.~\ref{fig:07NetsAndGraphs}a.
There are thirteen triangles between the first and second squares for the
peelable cases.
On the other hand, the effect appeared in different ways in the cases
starting from a regular triangle.
The results were classified into four types, illustrated in
Figs.~\ref{fig:07NetsAndGraphs}b,~c,~e, and~f.

The planar graph of Fig.~\ref{fig:07NetsAndGraphs}b has three-fold
rotational symmetry.
Therefore, any choice as $F_2$ among A, B, or C in
Fig.~\ref{fig:07NetsAndGraphs}b gives the same result.
Furthermore, it was peelable for left-handed people.
On the other hand, the choice among A, B, or C in
Fig.~\ref{fig:07NetsAndGraphs}c affected the result.
Only the case of selecting A as $F_2$ in Fig.~\ref{fig:07NetsAndGraphs}c
was peelable, and selecting B and C corresponded to
Fig.~\ref{fig:07NetsAndGraphs}e and Fig.~\ref{fig:07NetsAndGraphs}f,
respectively.

It should be noted that a snub dodecahedron, another Archimedean solid
having chirality, was an impossible polyhedron.
In other words, all the trials of its apple peeling unfolding failed.

\begin{figure}[ht]
\centering
\resizebox*{12cm}{!}{\includegraphics{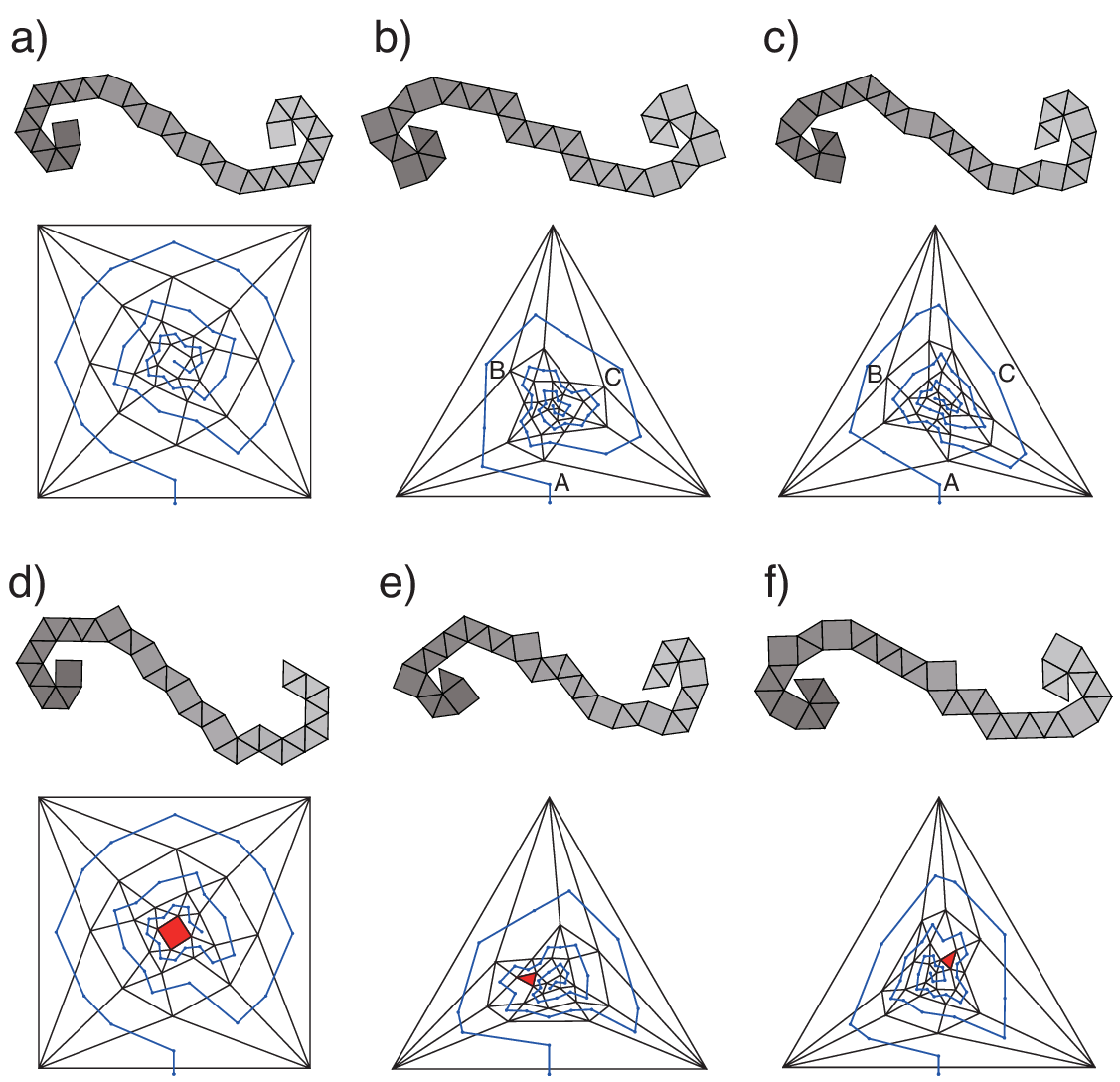}}
\caption{Comparison between a polyhedron net and a planar graph of a snub
  cube $\{3,3,3,3,4\}$.}
\label{fig:07NetsAndGraphs}
\end{figure}

Figure~\ref{fig:TruncatedDodecahedron} shows the results of truncated
dodecahedron $\{3,10,10\}$: an example of an impossible polyhedron.
We found some variations originated from numerical errors.
Although we obtained seven patterns for truncated dodecahedron
$\{3,10,10\}$, shown in Fig.~\ref{fig:TruncatedDodecahedron}, the basic
patterns were only the first three
(Figs.~\ref{fig:TruncatedDodecahedron}a,~b, and~c).
The other four results have the same structure as
Fig.~\ref{fig:TruncatedDodecahedron}c.

\begin{figure}[ht]
\centering
\resizebox*{12cm}{!}{\includegraphics{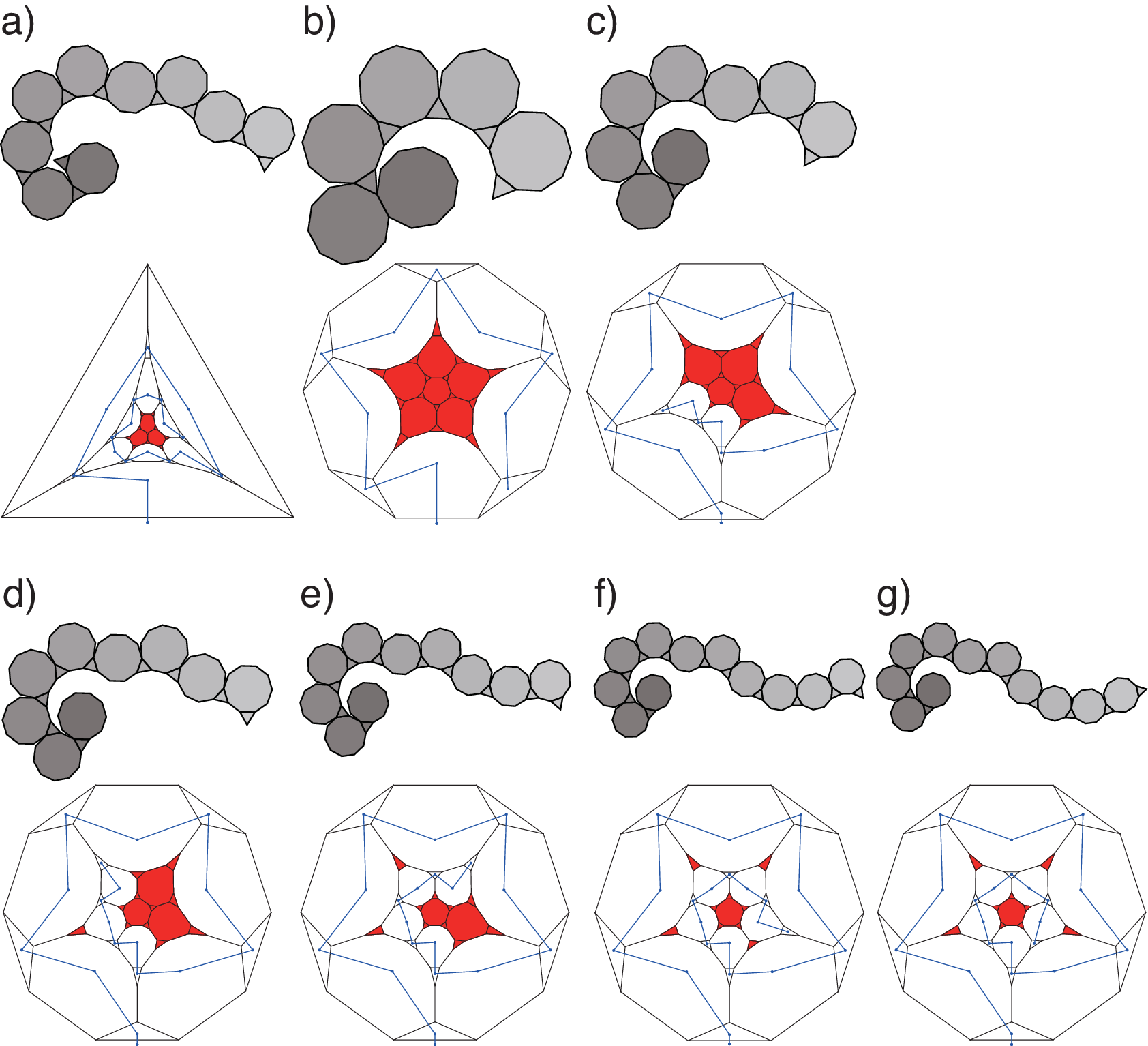}}
\caption{Summary of peelings of truncated dodecahedron $\{3,10,10\}$.}
\label{fig:TruncatedDodecahedron}
\end{figure}

\subsection{Catalan Solids}\label{catalan-solids}

\begin{figure}[ht]
\centering
\resizebox*{14cm}{!}{\includegraphics{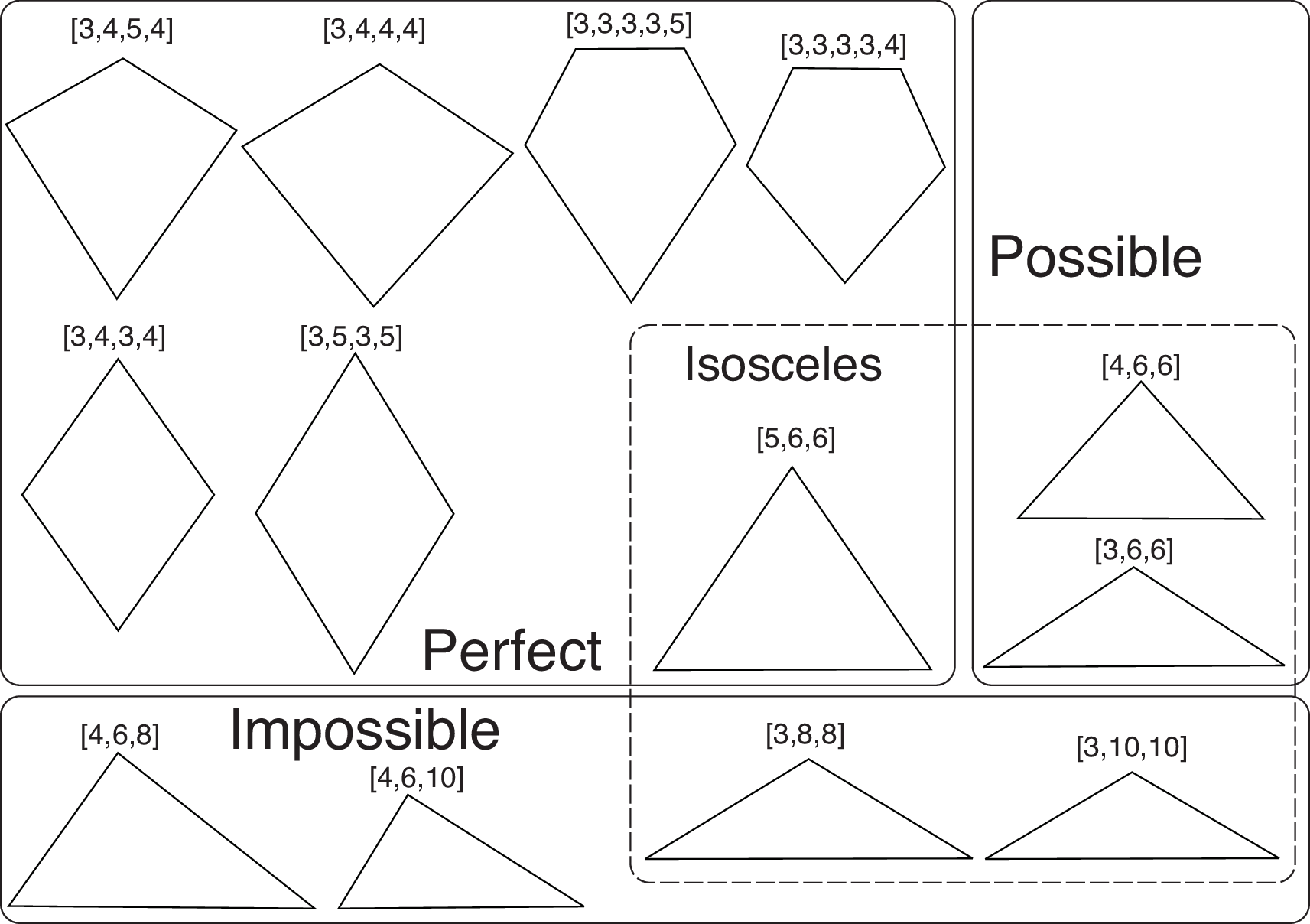}}
\caption{Facets of Catalan solids and their peelability.}
\label{Fig:CatalanFacets}
\end{figure}

Nine polyhedra were successfully peeled by apple peel unfolding as
summarized in Table~\ref{tab:summary} and Fig.~\ref{fig:AllNets}.
In addition, a comparison between the results and the shapes of the facets
is presented in Fig.~\ref{Fig:CatalanFacets}.
The polyhedra having pentagons ($[3,3,3,3,4]$ and $[3,3,3,3,5]$),
kites ($[3,4,4,4]$, $[3,4,5,4]$), and rhombuses ($[3,4,3,4]$,
$[3,5,3,5]$) were shown to be perfect.
The results of three Catalan solids are presented as examples in
Figs.~\ref{Fig:7D},~\ref{Fig:11D}, and~\ref{Fig:03D}.

All three examples are polyhedra consisting of isosceles triangles.
The pentakis dodecahedron $[5,6,6]$ was classified as perfect: all
peelings succeeded.
Figure~\ref{Fig:7D} demonstrates three types of unfolding, and the
difference arose from the connection of the first two faces.
The result of the tetrakis hexahedron $[4,6,6]$, as an example of a
possible polyhedron, is shown in Fig.~\ref{Fig:11D}.
The quadrilateral is the first polyhedron, which differs among the nets.
All trials of small triakis octahedron $[3,8,8]$ were classified into
three nets, as shown in Fig.~\ref{Fig:03D}.

\begin{figure}[ht]
\centering
\resizebox*{14cm}{!}{\includegraphics{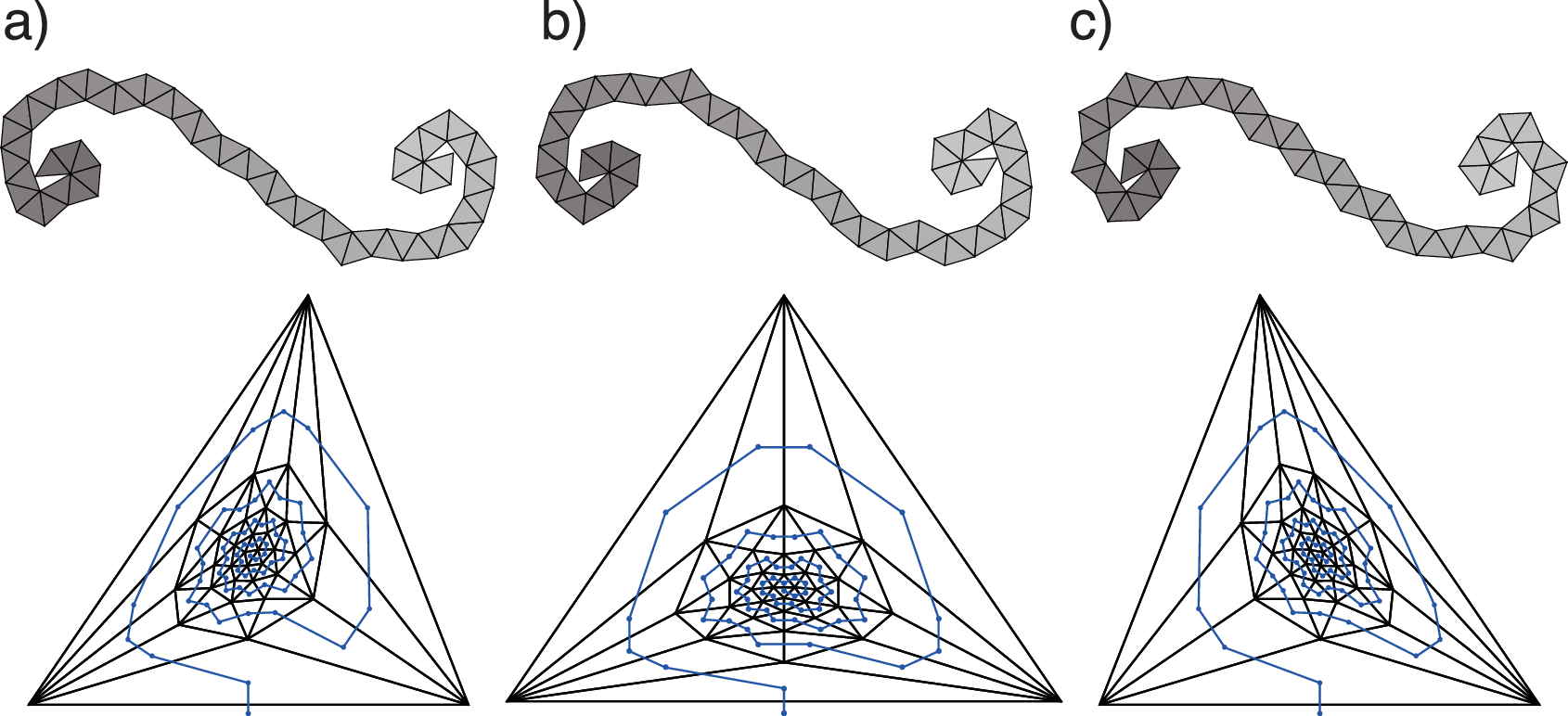}}
\caption{Summary of cases for pentakis dodecahedron $[5,6,6]$.}
\label{Fig:7D}
\end{figure}

\begin{figure}[ht]
\centering
\resizebox*{14cm}{!}{\includegraphics{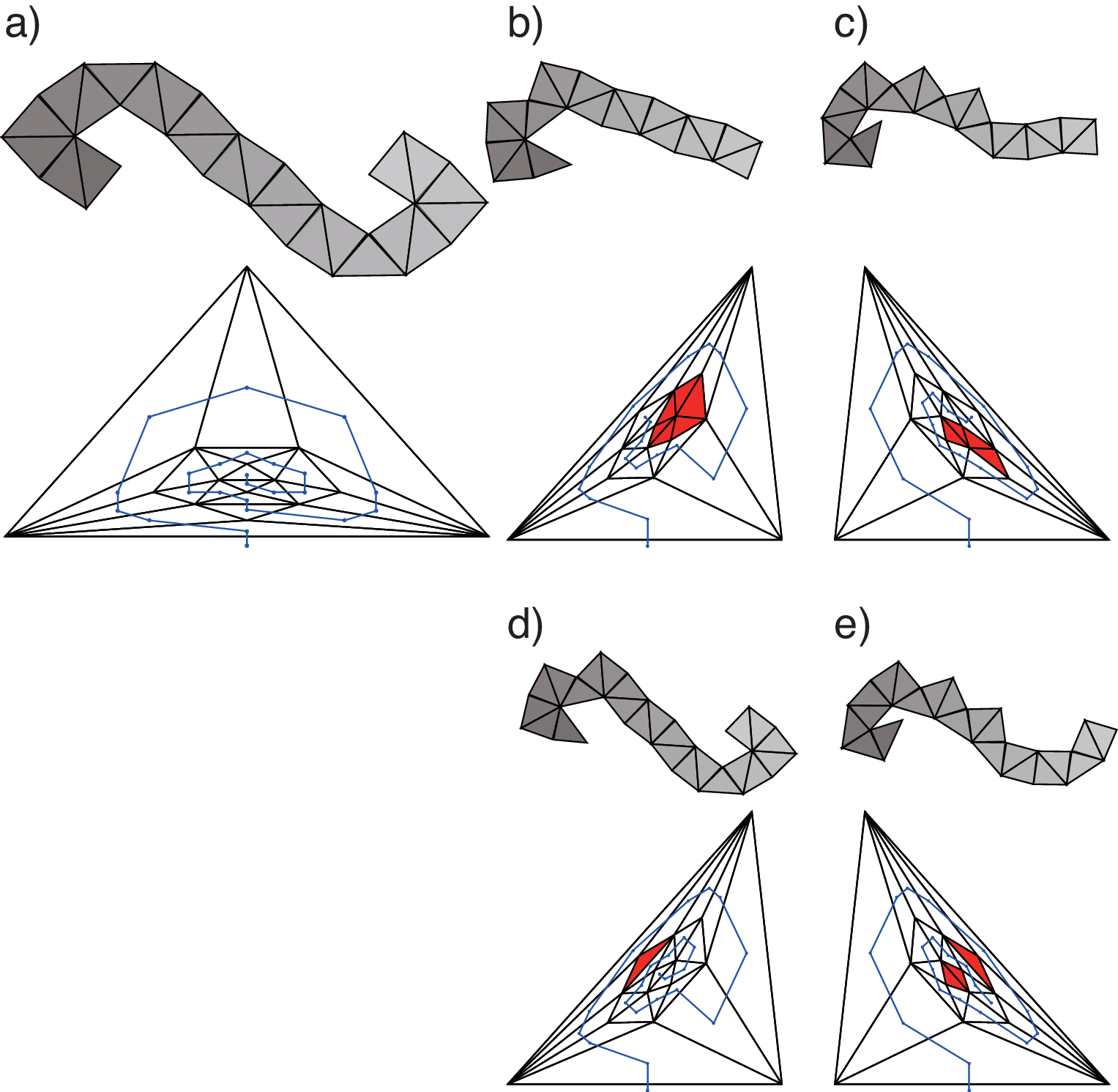}}
\caption{Summary of cases for tetrakis hexahedron $[4,6,6]$.}
\label{Fig:11D}
\end{figure}

\begin{figure}[ht]
\centering
\resizebox*{14cm}{!}{\includegraphics{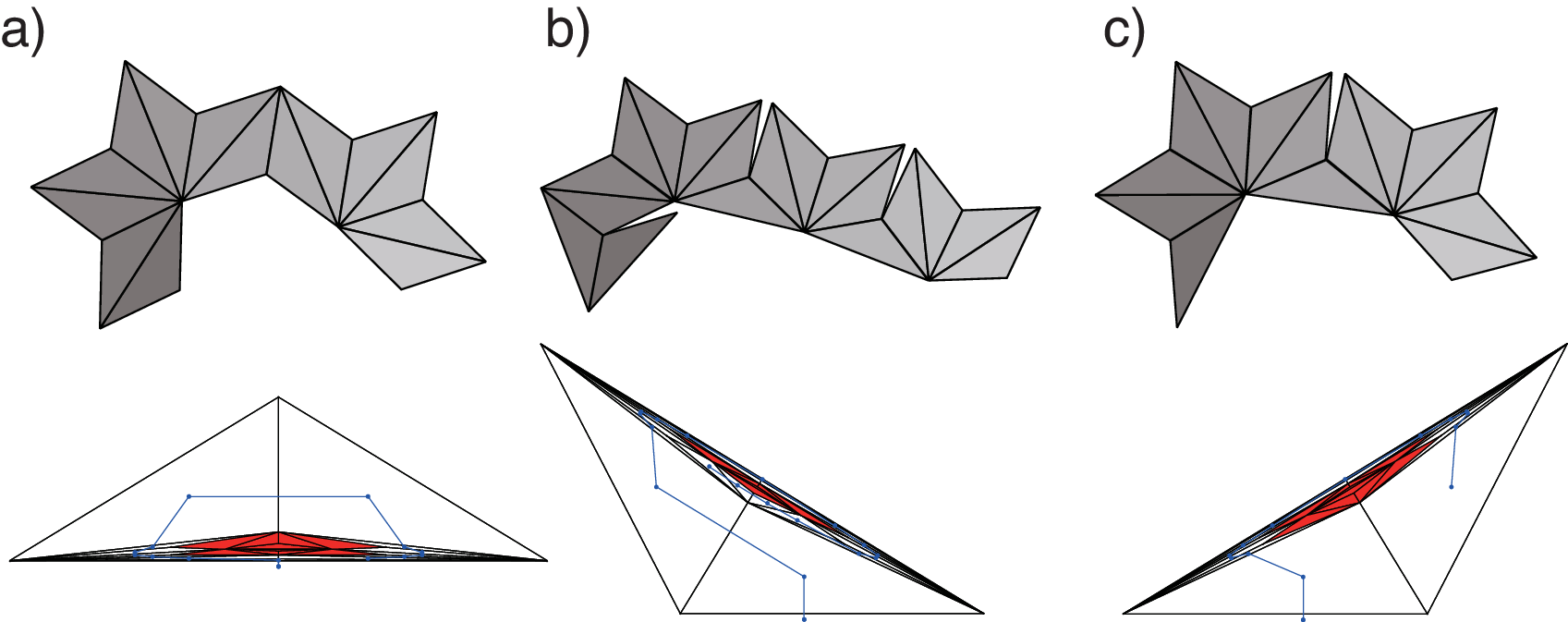}}
\caption{Summary of cases for small triakis octahedron $[3,8,8]$.}
\label{Fig:03D}
\end{figure}

\section{Discussion}\label{sec:discussions}

Using the proposed procedure, we successfully obtained results of apple
peel unfolding for Archimedean and Catalan solids.
The proposed definition and the algorithm to obtain the results seem
appropriate; however, some issues remain to be discussed.
These are extensions to general cases, the criterion of peelability, and
the possibility of other definitions.

Regarding whether our definition of apple peel unfolding can be extended
to general cases, one extension is to apply the definition to random
polyhedra.
The present definition seems to be appropriate for random polyhedra.
Another extension is to higher-dimensional polytopes.
We expect that the definition is applicable to four-dimensional polytopes
because it is consistent with Kaino's apple peeling of four-dimensional
polytopes.

It has not been clarified yet what geometrical property determines the
peelability of a given solid.
Although we tried to find such a property by considering dualities and
Hamiltonian paths, we were not able to.
The results for the Catalan solids having triangular faces indicate that a
flat shape tends to be not peelable; however, it is not clear how flatness
relates to peelability even qualitatively.

Failures of peelings can be classified into two types, although the
termination condition is only ``no candidate''.
One is the case that all neighbors have already been selected
(termination), and the other is that unselected face(s) were skipped
(isolation).
Although we could not classify them strictly, the difference will be one
of the keys to understanding the results.

Although our definition of apple peel unfolding provided fruitful results,
we should discuss the validity precisely because there is another unfolding
of a dodecahedron: the Atake-type unfolding.
This unfolding could not be obtained in the numerical simulation; however,
it can be said that this type is appropriate as another kind of apple
peeling.
The problem of extending our definition of unfolding to match the
Atake-type unfolding remains.

\section*{Acknowledgements}

One of the authors (T.Y.) thanks Prof.\ Keimei Kaino, emeritus professor
of the National Institute of Technology Sendai College, for comments on
the results during the preparation of the manuscript.

\section*{Disclosure statement}

The authors declare that they have no conflicts of interest concerning this
article.

\section*{Funding}

This research was supported by the Toyo University Short-term
International Visiting Professor Program 2022, and partially supported by
Chiang Mai University.



\begin{thebibliography}{9}

\bibitem{Demaine2010}
E.~D.~Demaine, M.~L.~Demaine, A.~Lubiw, A.~Shallit, and J.~L.~Shallit,
Zipper unfoldings of polyhedral complexes,
in \textit{Proc.\ 22nd Canadian Conf.\ Comput.\ Geom.\ (CCCG 2010)},
Winnipeg, August 2010.

\bibitem{ORourke2015}
J.~O'Rourke,
Spiral unfoldings of convex polyhedra,
\textit{arXiv preprint arXiv:1509.00321}, 2015.

\bibitem{Kaino2019}
K.~Kaino,
Apple-peel foldouts of four-dimensional regular polytopes: 24, 120 and
600-cells,
\textit{Symmetry: Art and Science}, pp.~25--30, 2019.

\bibitem{Tutte}
W.~T.~Tutte,
How to draw a graph,
\textit{Proc.\ London Math.\ Soc.}, \textbf{3}(1) (1963), 743--767.

\bibitem{Manolopoulos1991}
D.~E.~Manolopoulos, J.~C.~May, and S.~E.~Down,
Theoretical studies of the fullerenes: $\mathrm{C}_{34}$ to
$\mathrm{C}_{70}$,
\textit{Chemical Physics Letters}, \textbf{181}(2--3) (1991), 105--111.

\bibitem{Manolopoulos1993}
D.~E.~Manolopoulos and P.~W.~Fowler,
A fullerene without a spiral,
\textit{Chemical Physics Letters}, \textbf{204}(1--2) (1993), 1--7.

\bibitem{WSA2018}
L.~N.~Wirz, P.~Schwerdtfeger, and J.~E.~Avery,
Naming polyhedra by general face-spirals---theory and applications to
fullerenes and other polyhedral molecules,
\textit{Fullerenes, Nanotubes and Carbon Nanostructures},
\textbf{26}(10) (2018), 607--630.

\bibitem{Bartholdi2012}
L.~Bartholdi and A.~Henriques,
Orange peels and Fresnel integrals,
\textit{Mathematical Intelligencer}, \textbf{34} (2012), 1--3.

\bibitem{Uehara2020}
R.~Uehara,
\textit{Introduction to Computational Origami},
Springer Singapore, 2020.

\end{thebibliography}
\end{document}